\newtheorem{theorem}{Theorem}
\newtheorem{proposition}[theorem]{Proposition}
\begin{document}
\journal{http://arxiv.org/}

\begin{frontmatter}
\title{Deleveraging, short sale constraints and market crash}
\author{Liang Wu} 
\ead{liangwu@scu.edu.cn}
\author{Lei Zhang and Zhiming Fu}
\address{The School of Economics, Sichuan University,
Chengdu, PR. China, 610065}

\begin{abstract}
In this paper, we develop a theory of market crashes resulting from a deleveraging shock. We consider two representative investors in a market holding different opinions about the public available information. The deleveraging shock forces the high confidence investors to liquidate their risky assets to pay back their margin loans. When short sales are constrained, the deleveraging shock creates a liquidity vacuum in which no trades can occur between the two representative investors until the price drop to a threshold below which low confidence investors take over the reduced demands. There are two roles short sellers could play to stabilize the market. First, short sellers provide extra supply in a bullish market so that the price of the asset is settled lower than otherwise. Second, short sellers catch the falling price earlier in the deleveraging process if they are previously allowed to hold a larger short position. We apply our model to explain the recent deleveraging crisis of the Chinese market with great success. 
 

\end{abstract}

\begin{keyword}
Deleveraging, short sale constraints, market crash, Chinese stock market

\end{keyword}

\end{frontmatter}

\makeatletter
\newcommand{\rmnum}[1]{\romannumeral #1}
\newcommand{\Rmnum}[1]{\expandafter\@slowromancap\romannumeral #1@}
\makeatother
\section{Introduction}
Debt has become a hot topic since excessive debt is quoted as the key contributing factor of the economic problems in the Great Depression (\cite{fisher1933debt}), Japan's lost decade (\cite{koo2011holy}), and current financial crisis afflicting both the United States and Europe (see e.g., \cite{mian2012explains,hall2011long,guerrieri2011credit}). It has been argued by \cite{eggertsson2012debt,korinek2014liquidity} that a deleveraging shock following the excessive debt could push the economy into a situation of so called liquidity trap in which the creditors can not make up for the reduction consumption by the debtor unless the interest goes negative. The existence of liquidity trap is revealed in a simple New-Keynesian model in which two representative agents, borrowers and lenders who differ in their rates of time preference. 

Excessive debt could also be vicious to the financial system as illustrated by the recent turmoil in the Chinese stock market. How could the excessive debt and the following deleveraging shock impact the market? Does it create a liquidity trap in the financial system? To investigate these questions, we use a similar theoretic framework consisting of two representative investors. This theoretical framework is recently implemented by \cite{xu2007price} in his study of the price convexity and skewness when short sales are constrained. The model assumes that the two representative investors are low precision investors (denoting as $L$ type) and high precision investors ($H$ type). They have the same public available information, but they interpret the information differently. $H$ investors are more confident in the precision of the market signal and would buy more aggressively when the market signal is bullish than $L$ investors. When the market information is bullish, $H$ investors compete with $L$ investors and even among themselves to push the price very high. $H$ investors can borrow from the cash market if they are short of capital. In this framework, we could analyze what would happen if there is a deleveraging shock to $H$ investors who use margin loans to invest at the previous stage. The shock can come from exogenous sources, for example, it could come from the regulator to tight the monetary policy, or to clear the unmonitored leverage as recently implemented in China. It can also come from endogenous reasons, for example, the information of the market swings back from bullish to less bullish or to bearish. The price of the risky asset comes back down. When such things happen, the shrinking of $H$ investors' assets force them to liquidate their holdings to pay back their debt. 

In the real economy, one reason that the demand reduced by the debt can not be consumed by the creditor is because there is a lower bound of the interest. The other reason is that the price of the consumption product is sticky and creditor has to take all the reduced demand at the given price. In the financial system, the price of the investment product can vary freely in a short run. The deleveraging process does not create a liquidity trap. When the price comes back down, $L$ investors can make up for the demand reduced by $H$ investors at a lower price according to his price demand schedule. The reduced demand of the investment product by the debtor can be consumed completely as long as the price is low enough. 

It is a relief that there is no liquidity trap problem in the financial market, however, the deleveraging process would cause the crash of the market when investors are under short sale constraints. If there are no short sale constraints, there are two roles short seller could play to stabilize the market. First, when the information is very bullish, as $H$ investors push the price high, $L$ investors can provide extra supply to $H$ investors by short selling, so that the price is settled lower than otherwise. Second, since $L$ investors short sell following the price all the way up, when the market swings back down, the short sellers immediately start to buy back the asset all the way down until the deleveraging process finishes. 
If there are short sale constraints, $L$ investors are crowded out before $H$ investors push the price further high among themselves to compete for the demand. When the price comes back down, $L$ investors would not buy untill the price drops to the level right before they leave the market. The market could experience a free fall (we call it liquidity vacuum in contrast with the term liquidity trap) during which $H$ investors can not liquidate their holdings to $L$ investors to reduce their debt. 
It is ironic that short sale constraints become the reason of the market crash, while the implementation of the short sale constraints is supposed to protect the market from crash by restricting short selling activity. 

The literature on the impacts of short sale constrains is extensive. It has been revealed by \cite{miller1977risk,jones2002short} that short sale constraints can prevent negative information from being expressed and therefore lead to the overpricing of securities. \cite{saffi2010price} study more than 12,600 stocks from 26 countries and conclude that relaxing short-sales constraints has improved the price efficiency without an association with an increase in either price instability or the occurrence of extreme negative returns. \cite{bris2007efficiency,chang2007short,chang2014short,zhao2014short} make similar empirical observations that market efficiency can be improved when the short sale constraints are lifted. 

A mechanism that short sale constraints could cause the market crash has been studied by \cite{hong2003differences}. In their model, the market price is set by the rational expectation to reflect the information revealed by the trading activities of heterogeneous investors who hold different opinions about the stock's price. When investors are under short sale constraints, bad information is not fully reflected in a bullish market. However, when the market swings back, hidden bad information and thus a larger scale of price adjustment comes out and potential could cause the crash of the market. In this paper, we offer a different mechanism for the market to crash when short sale constraints are present. The mechanism lies in the deleveraging shock for debtor to unwinding their loans. The short sale constraints create a liquidity vacuum in which $L$ investors can not make up for the reduced demand by $H$ investors untill the price crashes to a low level. 

Short sale constraints are still fully imposed in some emerging markets (\cite{bris2007efficiency}). \cite{bai2007asset} list various costs and legal and institutional restrictions that could impose constraints on short selling. During the turbulent times, short sales are often partially re-imposed although they have been fully practiced under norm circumstances. \cite{qiu2013effect} argues that short-sale constraints will continue to affect the global financial markets in the future. It is therefore necessary for investors and especially policy makers to understand the potential problems including the market crash risk associated with short sales are constrained.  

The remainder of the article is organized as follows. In section 2 we solve the model and analyze the consequence when there is a develeraging shock.
In section 3, we first present numerical examples to show that market crashes could be caused by a deleveraging shock when short sales are constrained. Then we apply our model to the case of the most recent crisis in the Chines stock markets. Section 4 concludes.

\section{The Model}
We extend \cite{xu2007price}'s model for the inclusion of a deleveraging shock. The market consists of two assets, namely, a risky asset with a random payoff equal to $x$ and a risk-free asset with gross interest rate $R$. There are two types of investors who have the same public information, but disagree with the precision of the information. When the signal $s$ is revealed to the public at time 1, trading happens between the two types of investors and the market clears at equilibrium. At time 2, there is a sudden deleveraging shock. Debtors are required to unwind their loans if they borrow from the cash market at time 1.

We use the same notations as in \cite{xu2007price}. Assume that the public information is the signal with respect to the payoff of the risky asset at time 1, denoted as $s=x+\epsilon$. $x$ and $\epsilon$ are independent normal distributed random variables, i.e., $x\sim\mathcal{N}(\mu_x,\tau^{-1}_x)$ and $\epsilon\sim\mathcal{N}(0,\tau^{-1}_\epsilon)$, where $\tau_x$ and $\tau_\epsilon$ are the precision of $x$ and $\epsilon$, respectively. Assume that investors agree over the distribution of $x$ but disagree over the prevision $\tau_\epsilon$. Low confidence investors believe that the precision the $\epsilon$ is $\tau^2_L$, and high confidence investors who believe that the prevision is $\tau_H$, $\tau_L<\tau_H$ \footnote{The precision of investors hold about the signal $s$ can have multiple values and can even be continuous. If we divide the investors into two groups according to if their precision is lower or higher than a threshold $T$, type $\tau_H$ is the representative investor of the high precision investors above $T$ such that $\tau_H = \frac{\int_T^\infty d\lambda_\tau \tau }{\int_T^\infty d\lambda_\tau}$, and type $\tau_L$ is the representative investor of the low precision investors below $T$ such that $\tau_L = \frac{\int_0^T d\lambda_\tau \tau }{\int_0^T d\lambda_\tau}$.}. The proportion of $\tau_H$ investors is $\lambda$.


Investors maximize expected utility of wealth at the end of each period, whose utility function is assumed to be\footnote{The mean-variance utility function is equivalent to the utility function $U(W_\theta(t))=-e^{-aW_\theta(t)}$ under the normality assumption of $s$.},
\begin{eqnarray}
\label{eq:utility}
\textnormal{max}_{y_{1,\theta}} U(W_{1,\theta}) \equiv \textnormal{E}_\theta[W_{1,\theta}|s]-\frac{a}{2}\textnormal{Var}(W_{1,\theta}|s) \\ \nonumber
\textnormal{s.t.} \quad W_{1,\theta} = (W_{0,\theta}-y_{1,\theta}P_1)R + y_{1,\theta}x
\end{eqnarray}
where $W_{0,\theta}$ is the total initial wealth of type $\theta$ investor.  

After take a derivative over Eq.~(\ref{eq:utility}), we have a price and demand relationship for each type of investors to follow if no constraints are imposed, 
\begin{eqnarray}
\label{eq:mid}
y_{1,\theta} = \frac{(\hat{\mu}_\theta-RP_1)\hat{\tau}_\theta}{a} 
\end{eqnarray}
where $\hat{\tau}_\theta\equiv \textnormal{Var}^{-1}_\theta(x|s) = \tau_x + \tau_\theta$ is the posterior precision, and $\hat{\mu}_\theta \equiv \textnormal{E}_\theta[x|s] =  \frac{\tau_x}{\hat{\tau}_\theta}\mu_x + \frac{\tau_\theta}{\hat{\tau}_\theta}s$ is the posterior mean of the actual payoff $x$ given the public signal of $s$.

\subsection{Short sale constraints}
Different from Xu(2007)'s modeling, we assume that the short sale constrains are not fully imposed. There is a cap on the proportion of the asset that can be borrowed short sale practice. This modification enables us to apply the model to analyze the Chinese stock market where the margin lending and short sale policy has been practiced for more than 5 years. The scale of the short sale, however, is still very small. The other reason for this modification is that we can see that the gradual lift of the short sale constraints can lessen the risk of market crash. 

Assume that the proportion of the assets investors can borrow for short sales does not exceed $N$, i.e., $\lambda y_L \ge -N, (1-\lambda)y_H \ge -N$, where $N\ge 0$. The equilibrium solution is given by the following proposition:
\begin{proposition}
\label{prop:shortsaleBorrowing}
(a) If condition: $\hat{\mu}_H-\hat{\mu}_L>\frac{a(1+N)}{(1-\lambda)\hat{\tau}_H}+\frac{aN}{\lambda \hat{\tau}_L}$ (or $s>\mu_x + \frac{a(1+N)\hat{\tau}_L}{(1-\lambda)(\tau_H-\tau_L)\tau_x}+\frac{aN\hat{\tau}_H}{\lambda(\tau_H-\tau_L)\tau_x}$), 
the equilibrium is given as $P_1=P_H, \lambda y_{1,L} = -N,(1-\lambda)y_{1,H} = 1+N$;

(b) If condition: $\hat{\mu}_L-\hat{\mu}_H>\frac{a(1+N)}{\lambda\hat{\tau}_L}+\frac{aN}{(1-\lambda)\hat{\tau}_H}$ (or $s< \mu_x - \frac{a(1+N)\hat{\tau}_H}{\lambda(\tau_H-\tau_L)\tau_x}-\frac{aN\hat{\tau}_L}{(1-\lambda)(\tau_H-\tau_L)\tau_x}$), the equilibrium is given as $P_1=P_L$, $\lambda y_{1,L} = 1+N, (1-\lambda)y_{1,H} = -N$;

(c) If condition (a) and (b) do not hold, the equilibrium is given as,
\begin{eqnarray}
P_{1} = \frac{\lambda\hat{\tau}_L\hat{\mu}_L+(1-\lambda)\hat{\tau}_H\hat{\mu}_H-a}{R(\lambda\hat{\tau}_L+(1-\lambda)\hat{\tau}_H)} \\
y_{1,L} = \frac{1/a(1-\lambda)\hat{\tau}_L\hat{\tau}_H(\hat{\mu}_L-\hat{\mu}_H)+\hat{\tau}_L}{\lambda\hat{\tau}_L+(1-\lambda)\hat{\tau}_H} \\
y_{1,H}=\frac{1/a\lambda\hat{\tau}_L\hat{\tau}_H(\hat{\mu}_H-\hat{\mu}_L)+\hat{\tau}_H}{\lambda\hat{\tau}_L+(1-\lambda)\hat{\tau}_H}
\end{eqnarray}

In the above expressions, $P_H$ and $P_L$ are the price corresponding to the demand $(1-\lambda)y_{1,H} = 1+N$ of $H$ investors,  and $\lambda y_{1,L} = 1+N$ of $L$ investors, respectively. They are given by the following expressions:
\begin{eqnarray}
\label{eq:ph}
&P_H = \frac{1}{R}(\hat{\mu}_H-\frac{a(1+N)}{(1-\lambda)\hat{\tau}_H}) \\
\label{eq:pl}
&P_L = \frac{1}{R}(\hat{\mu}_L-\frac{a(1+N)}{\lambda\hat{\tau}_L})
\end{eqnarray}
\end{proposition}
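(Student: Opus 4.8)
The plan is to treat each type's problem as a constrained concave maximization and exploit the fact that the short-sale cap enters only through a single inequality, $\lambda y_{1,L}\ge -N$ for the low-confidence type and $(1-\lambda)y_{1,H}\ge -N$ for the high-confidence type. First I would write the Karush--Kuhn--Tucker conditions for the two problems in Eq.~(\ref{eq:utility}). Since $U$ is strictly concave in $y_{1,\theta}$ (the variance term contributes $-\tfrac{a}{2}y_{1,\theta}^2/\hat\tau_\theta$), each type's optimal holding equals the unconstrained demand of Eq.~(\ref{eq:mid}) whenever its cap is slack, and equals the boundary value $-N/\lambda$ (resp.\ $-N/(1-\lambda)$) whenever the associated multiplier is positive. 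The market-clearing identity $\lambda y_{1,L}+(1-\lambda)y_{1,H}=1$ then closes the system. A useful preliminary observation is that the two caps can never bind simultaneously: if they did, clearing would force $-2N=1$, impossible for $N\ge 0$. Hence exactly three mutually exclusive regimes are possible, and the proof reduces to identifying price and demands in each and delineating the $s$-region on which each prevails.

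Second, I would solve the three regimes directly. In regime (a) the low-confidence type sits at its cap, $\lambda y_{1,L}=-N$, so clearing gives $(1-\lambda)y_{1,H}=1+N$; since $H$ is then unconstrained, substituting this quantity into Eq.~(\ref{eq:mid}) and inverting for the price yields $P_1=P_H$ of Eq.~(\ref{eq:ph}). Regime (b) is the mirror image with $H$ and $L$ interchanged, giving $P_1=P_L$ of Eq.~(\ref{eq:pl}). In regime (c) neither cap binds, so both holdings follow Eq.~(\ref{eq:mid}); inserting both into the clearing condition gives a single linear equation in $P_1$ whose solution is the stated $P_1$, and back-substitution produces $y_{1,L}$ and $y_{1,H}$. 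This part is routine linear algebra.

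Third, and this is where the real work lies, I would establish that the stated inequalities are exactly the consistency conditions selecting each regime. For regime (a) the binding assumption $\lambda y_{1,L}=-N$ is legitimate precisely when the unconstrained demand of $L$ evaluated at price $P_H$ does not exceed $-N$; imposing $\lambda y_{1,L}^{\mathrm{unc}}(P_H)\le -N$ via Eq.~(\ref{eq:mid}) and Eq.~(\ref{eq:ph}) reduces, after cancellation, to $\hat\mu_H-\hat\mu_L>\frac{a(1+N)}{(1-\lambda)\hat\tau_H}+\frac{aN}{\lambda\hat\tau_L}$. I would then check that the same threshold arises from the other side by setting the interior demand $\lambda y_{1,L}$ of regime (c) equal to $-N$, which confirms that price and demands match continuously at the boundary. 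Combined with monotonicity of the interior solution in $s$ (the interior $\lambda y_{1,L}$ is decreasing and $(1-\lambda)y_{1,H}$ increasing in $s$), this shows the three regions tile the line without gap or overlap, so that in the complement of (a) and (b) both caps are indeed slack. Regime (b) follows by the symmetric computation.

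Finally, to obtain the equivalent conditions on the signal, I would substitute $\hat\mu_\theta=\frac{\tau_x}{\hat\tau_\theta}\mu_x+\frac{\tau_\theta}{\hat\tau_\theta}s$ and simplify the difference to $\hat\mu_H-\hat\mu_L=\frac{\tau_x(\tau_H-\tau_L)}{\hat\tau_H\hat\tau_L}(s-\mu_x)$; because $\tau_H>\tau_L$ this is an increasing affine function of $s$, so each threshold on $\hat\mu_H-\hat\mu_L$ converts into the stated threshold on $s$ by a single division. I expect the main obstacle to be organizational rather than conceptual: carefully verifying that the binding/slack pattern assumed in each regime is self-consistent, and tracking the bookkeeping so that the superficially distinct threshold expressions collapse onto the two stated forms.
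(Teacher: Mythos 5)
Your proposal is correct and follows essentially the same route as the paper's own proof: conjecture the binding/slack pattern in each of the three regimes, impose the market-clearing condition $\lambda y_{1,L}+(1-\lambda)y_{1,H}=1$, and invert the linear demand schedule of Eq.~(\ref{eq:mid}) to obtain $P_H$, $P_L$, and the interior price. The difference is only one of rigor: the paper asserts without computation that the constrained type's demand violates the cap in regimes (a) and (b) and that neither cap binds in (c), whereas you explicitly verify these KKT consistency conditions, rule out both caps binding simultaneously, and use monotonicity of the interior demands in $s$ to show the regimes are exhaustive and non-overlapping --- steps the published proof leaves implicit.
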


\begin{proof}
Under (a), according to the price demand relationship of Eq.~(\ref{eq:mid}), the demand $\lambda y_{1,L}$ of $L$ investors exceeds the short sale constraints. The short sales of $L$ investors are saturated at the lower bound $\lambda y_{1,L}=-N$, the equilibrium price is therefore determined by the demand of the $H$ investors from the market clearing condition $(1-\lambda)y_{1,H} = 1-\lambda y_{1,L} = 1+N$. From the price demand relationship, we have $P_1 = P_H$. Similar reasoning is applied to (b). In this case, the short sales of $H$ investors are saturated at $-N$. $P_1$ is determined by $\lambda y_{1,L} = 1+N$. Under (c), the equilibrium is solved by plugging Eq.~(\ref{eq:mid}) for both $L$ and $H$ investors to the market clearing condition $\lambda y_{1,L,0} + (1-\lambda)y_{1,H,0} = 1$.
\end{proof}

\subsection{Deleveraging shock}
Now we consider a scenario when $s$ is very bullish as in condition (a) of Proposition~\ref{prop:shortsaleBorrowing}. Only $H$ investors hold the risky asset and would like to leverage if they are short of capital. If $H$ investors are in debt, they have to liquidate part or all of the risky asset when facing a deleveraging shock. Since in this case $L$ investors are crowded out, the price $P_1$ of the risky asset at time 1 is higher than the price threshold $P_s$ below which $L$ investors are willing to absorb the demand reduced by the $H$ investors. The threshold price $P_s$ is determined if we set the demand of $L$ investors to be $\lambda y_{1,L}=-N$,
\begin{eqnarray}
\label{eq:ps}
P_s = \frac{1}{R}(\hat{\mu}_L+\frac{aN}{\lambda\hat{\tau}_L}).
\end{eqnarray}

If under condition (a) of Proposition~\ref{prop:shortsaleBorrowing}, 
\begin{eqnarray}
P_1 - P_s = P_H - P_s >0
\end{eqnarray}
The positive difference $P_1-P_s$ between $P_1$ and $P_s$ creates a liquidity vacuum. The price of the risky asset would experience a free fall from $P_1$ to $P_s$ before the liquidation of $H$ investors can actually start. There is no buying from $L$ investors to help $H$ investors unwind their debt. 
Note from Eq.~(\ref{eq:ph}) and Eq.~(\ref{eq:ps}), $P_H$ is a decreasing function of $N$ and $P_s$ is an increasing function of $N$. If the short sale constraints of $L$ investors are less strict ($N$ is larger), short sales from $L$ investors can reduce the impact of the deleveraging shock by reducing the difference $P_1-P_s$ from two sources. First, at time 1 they provide extra liquidity to $H$ investors by short selling when $s$ is very bullish, so that $P_1$ can not be pushed that high. Second, the short sellers can sell all the way up to the equilibrium price if they are under less strict constraints. At time 2 they step in immediately at a high price level to absorb the demand reduced by $H$ investors when there is a deleveraging shock. On the contrary, if the short sellers are force to leave the market before the price is pushed high due to the strong demand of $H$ investors, they could come back only when the price drops back to the level they leave the market. 

In addition to the free fall of the price before $L$ investors come back, the price needs to adjust further for $H$ investors to sell off enough asset to pay back their debt. Assume that the leveraging ratio is $\eta$, i.e., the debt $H$ investors owe is $\eta P_1$ ($P_1$ is also the market value of the risky asset when there is only 1 share). 

Let $P_2$ be the price at which the risk asset finally stabilizes, 
\begin{eqnarray}
\nonumber
\int_{P^*}^{P_2}P\lambda d y_{1,L} = \int_{P^*}^{P_2}P\frac{-\lambda\hat{\tau}_LR}{a}dP \\
=\frac{\lambda\hat{\tau}_LR}{2a}(P^{*2}-P_x^2) = \eta P_1 \\
P_2^2 = P^{*2}-\frac{2a\eta P_1}{\lambda\hat{\tau}_LR} 
\end{eqnarray}
where $P^*=\textnormal{min}(P_1,P_s)$ is the highest price $L$ investors start to add their holdings of the risky asset. It is possible that $H$ investors can not pay back all debt when $\eta$ is large enough. In this case, the lowest level $P_2$ drops to is after $H$ investors dump all of their holdings, and thus $P_2$ is bounded below by $P_2\ge \frac{1}{R}(\hat{\mu}_L-\frac{a}{\lambda\hat{\tau}_L})$ when $y_{2,L}=1$.

In the above analysis, we assume that $s$ remains the same during the deleveraging process. The crash of the market conveys a negative signal. $s$ could decrease significantly, which drags the price further deeper and crates an even bigger problem.

Can the lowering of interest stabilize the market? First, the lowering of interest to support the crash of a financial bubble is not justified if the financial crisis does not hurt the real economy. Even if it does, there is a limit the lowering of the interest can increase $P_s$ since the interest rate is lower bounded by $(\hat{\mu}_L+\frac{aN}{\lambda\hat{\tau}_L})$. If $s$ is large enough, the gap between $P_1$ and $P_s$ still exists,

\section{Results}

\subsection{Simulation}
In this section, we present some numerical examples to illustrate how the market crashes when there is a deleveraging shock. The parameters in our examples are set as $\mu_x = 1.5, \tau_x = 1, R=1.05, \tau_L = 0.5, \tau_h = 1.5, \lambda = 0.5, a = 0.5, s = 5$. We set the short sale constraints parameter $N$ in a rather large region $N\in [0,0.6]$ to cover many different scenarios. In some case $N\in [0,0.5)$, we have $s>\mu_x+A$ and therefore $P_1>P_s$. There are some other cases $N\in (0.5,0.6]$, we have $\mu_x-B<s< \mu_x+A$, and therefore $P_1<P_s$. Suppose at time 1, $H$ investors have enough cash (they can borrow if not) to clear the market at $P=P_H$ when $s>\mu_x+A$, and at $P=P_{1,0}$ when $\mu_x-B<s< \mu_x+A$. The leverage ratio is given as $\eta\in[0,0.4]$. The amount of debt $H$ investors owe at time 1 is equal to $\eta P_1$.

First, we plot $P_1$ and $P_s$ against $N$ in Fig.~\ref{fig:price}. As expected by the theoretic model, the figure shows that $P_1$ and $P_s$ are respectively the decreasing function and increasing function of $N$ in a bullish market. In addition, we also plot the threshold price $P_{s0}$ when we set $R=1$ with zero actual/nominal interest rate. It shows that lowering the interest rate gives $L$ investors incentive to step in at a higher price but can not completely solve the liquidity vacuum problem. At $N=0.5$, $P_1$ and $P_s$ meet and the liquidity vacuum during which the price would free fall vanishes before $H$ investors are able to unwind the debt. The final price return $\log(P_2)-\log(P_1)$ is reported in Fig.~\ref{fig:return} for different combinations of $\eta$ and $N$. The top figure presents the three-dimensional view of the log-return as the function of $\eta$ and $N$. The bottom figure presents the contour of the log-return in the $\eta-N$ plane. The maximum reduction of the price (log return is $-0.32$) is obtained at $(\eta,N)=(0.4,0)$, when leverage is at the highest level and no short sale is allowed. The minimum reduction of the price (log return is $0$) is obtained at $(\eta,N)=(0,[0.5,0.6])$, when short sale is allowed to be more than 0.5 share and there is no debt. Notice that even when there is slight debt and when short sales are not fully practiced ($\eta=0^+, N<0.5$), the market would experience a non-zero free fall crash. 

When $N\in [0,0.5)$, the larger the leverage ratio $\eta$ is or the smaller the short sale constraint ratio $N$ is, the deeper the market would crashes into. When $N\in (0.5,0.6]$, further increase of $N$ does not change the price return, since in this case $P_1<P_s$. We have already eliminated liquidity vacuum problem. The drop of the price is purely caused by the liquidating operation of $H$ investors. 

\begin{figure}
\centering
\includegraphics[width=0.9\linewidth]{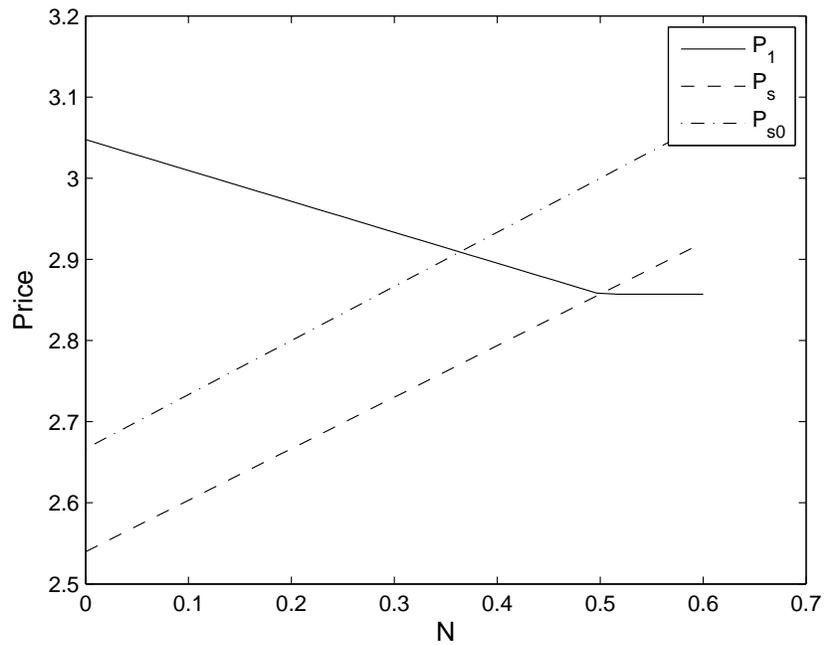}
\caption{$P_1$ and $P_s$ are respectively the decreasing function and increasing function of $N$ in a bullish market $s=5$. $P_1$ is the equilibrium price at time 1, and $P_s$ is the threshold price below which $L$ investors are willing to take over the reduced demand by $H$ investors. $P_{s0}$ is the the threshold price when we set $R=1$ (zero actual/nominal interest rate) at time 2. The short sale constraint ratio $N$ is the maximum proportion of the security allowed for short sales. $P_1-P_s$ is the gap the price would free fall in the presence of a deleveraging shock.}
\label{fig:price}
\end{figure}

\begin{figure}
\centering
\includegraphics[width=0.9\linewidth]{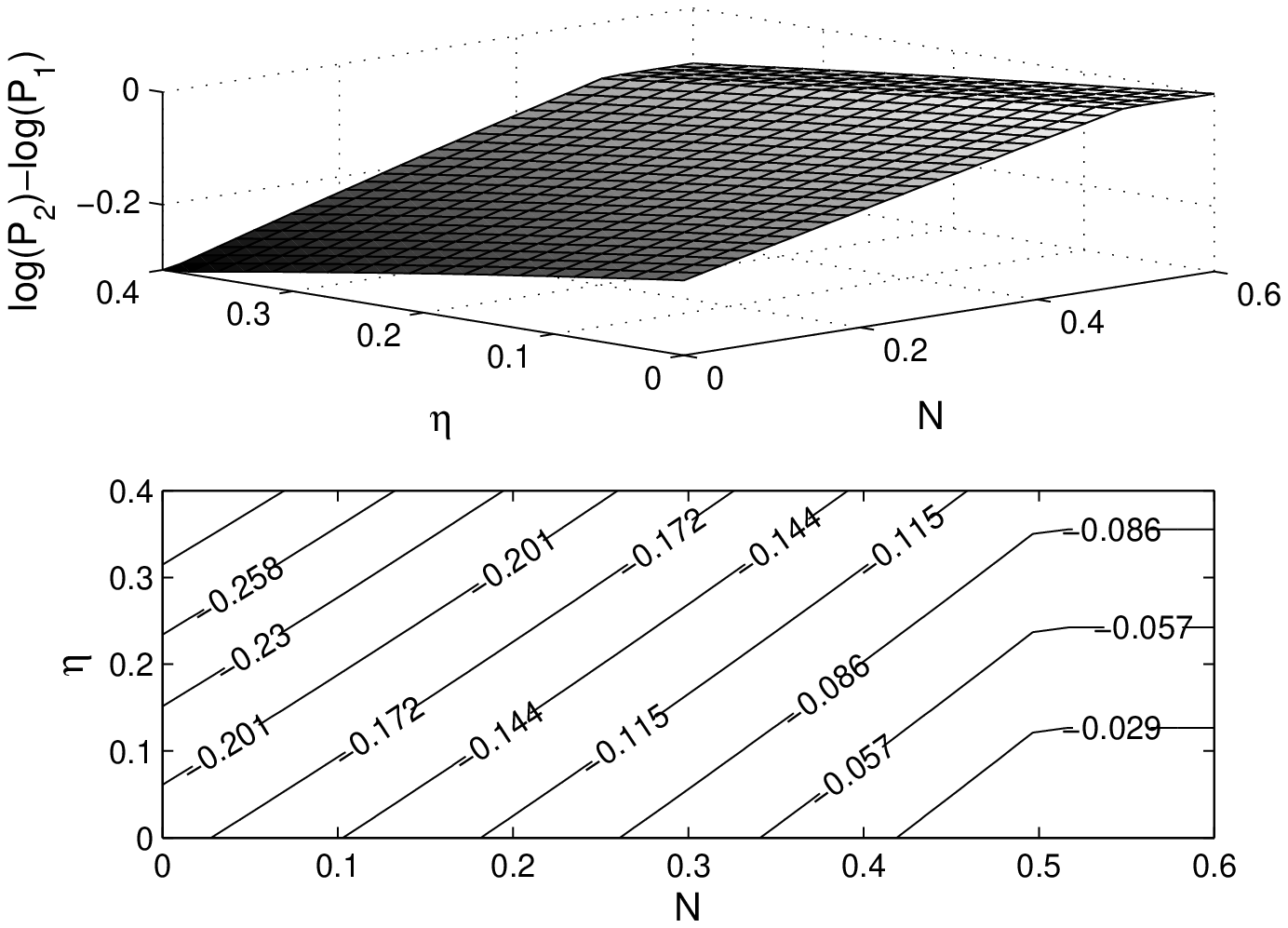}
\caption{Log-return of the price from time 1 to time 2 ($\log(P_2)-\log(P_1)$) after $H$ investors liquidate part or all of their holdings to pay back their debt. The top figure is the 3-d view of the log-return as the function of the leverage ratio $\eta$ and short sale constraint parameter $N$. The bottom figure is the 2-d contour version of the top figure.}
\label{fig:return}
\end{figure}

\subsection{Empirical Results}
The Chinese market has experienced an unseen rally for almost 10 years from the late half 2014 to the early half of 2015. The CSI300 index, an index of China’s biggest-listed companies, reaches the highest point of 5335 on June 12, 2015 from 2164 on July 1, 2014. It is a massive 147\% surge in less than 1 year. But it soon crashes to 3663 after 17 trading days on July 8, 2015. 
The maximum daily loss of CSI300 index is 8.25\% recorded on June 26, 2015 (even when there is a daily 10\% limit of price change for all stocks traded in the Chinese markets). There are 8 days out of 17 days from June 12, 2015 to July 8, 2015, when more than 900 stocks halt trading because they hit the 10\% daily bottom limit. The stock prices have been pushed so quickly in less than 1 year. It is firmly in the bubble territory and should reasonably be followed by a correction. However, the anticipated correction of over-valued stocks hardly seems cause for the dramatic disaster. The government has taken a spectacle of drastic actions to save the market. The People's Bank of China intervenes on June 27, 2015 by cutting the interest by 0.25\%. The initial public offerings are suspended on July 3, 2015 by limiting the supply of shares to drive up the prices of those already listed. Far from saving the market from drowning, those efforts only pushed the market further under water. Starting from July 6, the China Securities Finance Corporation Limited (CSF) starts to buy blue chip stocks and even small and medium-sized stocks. On July 8, CSF lends 260 billion yuan to 21 brokerage firms so they can purchase "blue chip" stocks on top of what the 120 billion the brokerages vowed to buy. The market finally stabilizes and rebounces on July 9, 2015. 
Fortunately, the financial crisis in China is not a systemic risk. Less than 15\% of household financial assets are invested in the stock market: which is why soaring shares did little to boost consumption and crashing prices will do little to hurt it.
 
One fact can not be ignored to understand the frenetic behavior of the market. The rally of the Chinese market is boosted by the margin lending. Many stocks were bought on debt. People borrow money with an interest rate of about annual 8.6\% to invest in the market from the margin lending channel offered by brokers for a designated list of stocks. From the public monitored data of margin lending, the outstanding margin loan amounts more than 2,200 billion yuan, which is about 11\% of the market value of those designated stocks. In addition to the public available data, there are other sources of margin lending which is not monitored by the China's Securities Regulatory Commission. The unmonitored margin lending loan amounts more than 1,700 billion yuan with a cost of 13\%-20\% annul rate. It is not sustainable and the unwinding of these debt causes the rapid crash of the market which the government has been unable to stop.

In this section, we use the monitored margin lending and short selling data to test our model. In the Chinese market, a list of designated stocks are allowed for margin lending and short sale practice. Up to June 12, 2015, there are 895 of them, among which 846 stocks have outstanding margin lending loans. We use the outstanding margin lending loans and short sale data recorded on June 12, 2015 when the CSI300 index hits the highest point. The price return is calculated from June 12, 2015 to July 8, 2015 when the market hits the bottom for the first stage. During this period of fast melting down process, although the People's bank of China cuts the interest rate and IPO is suspended. Only until the rather late part of the considered period (starting from June 6th, 2015), there is direct massive intervention from CSF by buying falling not only blue chips but also medium or small-sized stocks. Th considered melting down period can be viewed as purely dragged by the market force.

The descriptive statistics of the variables for the considered period is list in Table.~\ref{tab:my_labelx}. $\beta$s of the considered stocks are estimated using data from January 1, 2013 to June 12, 2015. The average log-return of the considered stock is -0.59 with standard deviation of 0.31. Since the stocks in the designated list are mostly large-sized, the average beta is only 0.74. The leverage ratio and short sale ratio are both calculated by dividing the outstanding lending loan (for leverage ratio) and short interest (for short sale ratio) by the outstanding market value. The average leverage ratio is at the high level of 0.11. Although the margin lending and short sale policy has been implemented back from 2010, there are still limited sources for investors to borrow the stocks for short sales. The short sale ratio is rather small compared to the leverage ratio. The average short sale ratio is only $1.9\times 10^{-4}$.

\begin{figure}
\centering
\includegraphics[width=0.9\linewidth]{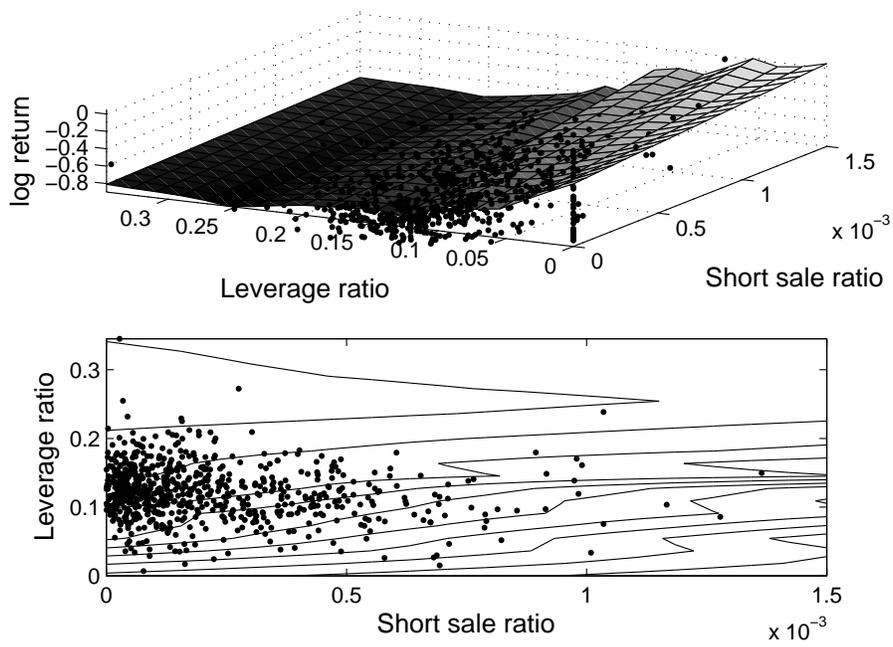}
\caption{Log-returns of 846 Chinese stocks in the designated list for margin lending and short sale practice. The dots are the actual data. The surface in the top figure and the contour in the bottom are the interpolations of the real data using a local linear regression.}
\label{fig:act}
\end{figure}

\begin{table}[]
\centering
\caption{Description statistics of 846 Chinese stocks that have non-zero outstanding margin loans on June 12, 2015. The time period to measure the log-return is from June 12, 2015 to July 8, 2015. The outstanding margin loans and short sale data used to calculate the leverage ratio and short sale ratio are recorded on June 12, 2015. The leverage ratio and short sale ratio are calculated using the published outstanding margin loans and short interest with respect to the outstanding market value of each 
considered stock.}
\begin{tabular}{c|c|c}
\hline
 & \textnormal{}{Mean} & \textnormal{Std.} \\
 \hline
\textnormal{log-return} & -0.59  & 0.31   \\
\textnormal{$\beta$} & {0.74} & {0.15} \\
\textnormal{smb} & {0.33} & {0.21} \\
\textnormal{hml} & {0.029} & 0.24 \\
\textnormal{leverage ratio} & {0.11} & {0.053}\\
\textnormal{short sale ratio} & {$1.90\times 10^{-4}$} & {$2.51\times 10^{-4}$} \\
\hline
\end{tabular}
\label{tab:my_labelx}
\end{table}

\begin{table}[]
\centering
\caption{Regression analysis of stock returns. The explained variables are the returns of Chinese stocks from June 12, 2015 to July 8, 2015. The three columns report the regression results when the dependent variables are: the FAMA three factors; the leverage ratio and short sale ratio; the FAMA three factors, the leverage ratio and short sale ratio, respectively. The leverage ratio and short sale ratio alone can give an explanation of 86\%, larger than the FAMA three factor model.}
\begin{tabular}{c|c|c|c}
\hline
$\beta$ & -0.41 (-5.88)*** &  &-0.18 (-2.99)** \\
\textnormal{smb}& -0.40 (-6.91)*** &  & -0.31 (-5.91)*** \\
\textnormal{hml}& -0.079 (-1.65)* &   & 0.038 (0.93) \\
\textnormal{leverage ratio} & & -3.14 (-19.31)*** & -2.91 (-18.00)*** \\
\textnormal{short sale ratio} &  & 171.20 (5.00)*** & 83.91 (2.46)**   \\
\textnormal{const} & -0.15 (-3.10)*** & -0.28 (-13.89)*** & -0.054 (-1.26) \\
\hline
\textnormal{F-Statistics} & 1000 & 1731 & 976  \\
$R^2$ adjusted & 0.825 & 0.860 & 0.874   \\
\textnormal{\# of samples} & 846 & 846 & 846 \\
\hline
\end{tabular}
\label{tab:my_label2}

\small{\emph{Note: One asterisk (*), two asterisks (**), and triple asterisks (***) denote significance at 10\%, 5\%, and 1\% critical level, respectively.}}
\end{table}

First, we show the log-return data of the considered 846 stocks. The data is presented in a similar way as the numerical examples. The log-returns of the stocks from June 12, 2015 to July 8, 2015 are plotted against the leverage ratio and short sale ratio. The results are depicted in Fig.~\ref{fig:act}. The dots are the actual data. The surface of the top figure and the contour in the bottom are the interpolated results using a local linear regression to fit the samples (using the "fit" function from Matlab the fitting method "lowessfit"). The results are qualitatively similar to the numerical results. The log-returns are clearly a decreasing function of leverage ratio and increasing function of short sale ratio. There are some quantitative difference between the actual results and the numerical examples. For example, the minimum log-return is about -0.8, much lower than the theoretic prediction. The log-return is very sensitive to the short sale ratio. Even the variation of short sale is in the order $10^{-3}$, it changes the log-return significantly. The purpose of this study is to qualitatively understand the major cause of the market crash and its implication for policy makers. Fitting perfectly to the actual data is not the purpose of this study. Besides, there are many other factors that could influence the observations. As mentioned in the model section, a deleveraging shock conveys a negative signal to the market. $s$ could decrease significantly, which drags the price further deep.

We then use a linear regression analysis to see quantitatively how much the log returns can be explained by the major factors considered in this study, i.e., the leverage ratio and short sale ratio. The results are reported in Table.~\ref{tab:my_label2}. The first equation in the column 1 is the FAMA three factor model. The coefficient of $\beta$ is negative because the market return is negative during the considered time period. The second column reports the results when we regress the log-return with the leverage ratio and short sale ratio. As expected, we have both statistically and economically significantly negative and positive contributions from the leverage ratio and the short sale ratio, respectively. We can see that from the $R^2$ the leverage ratio and short sale ratio can explain the 86\% of the variance of the data, higher than the FAMA three factor model. When all variables are present in the regression, $R^2$ only  increases marginally from 0.86 to 0.874. The sign of the leverage ratio and short sale ratio are again as expected. 
By looking at the constant term of the third column, which is the remainder (only -0.054 out of -0.59) of the aggregate market return that is not explained by our model, we see that our model can explain not only the cross sectional variation of the price return up to 87\%, but also the 91.5\% (-0.54 out of -0.59) of the aggregate market return.

\section{Conclusion and Discussion}
In this study, we implement a theoretical model consisting of two representative investors to investigate the problems of excessive leverage. We offer a mechanism that excessive leverage could cause the crash of the market if the short sales are constrained. There are two roles the short sellers can play in stabilizing the market in a bullish environment where over-confident investors take excessive leverage to invest. The short sellers provide extra supply by short selling to temper the exuberant market so that the over-confident investors would not push the market too high. During the deleveraging process when the market swings back, the short sellers can catch the falling price earlier to close their short positions. If the short sales are constrained, the short sellers are crowded out before the price is pushed too high by the over-confident investors. It creates a liquidity vacuum in which short sellers would not buy the reduced demand of over-confident investors untill the price free fall to the level the short sellers are previously crowded out.  

Excessive leverage is vicious but should not be blamed. There are no effective ways to stop the loans if investors believe it is profitable to invest in the risky asset by borrowing money from a cash market. It is equally hard to prevent people from financing through other channels for stocks that are not in the margin lending list. What could be done is to introduce a balancing force from the short sellers so that the market can be settled at a fair price. 

Short sales are blamed to cause the crash of the market. People forget that it is the short sale constraints that push the price of the market to the bubble territory in the first place. Our study shows that short sellers can even help stabilize the market by absorbing the reduced demand by the debtors when they unwind their loans dealing a deleveraging shock. It is important for the investors and especially the policy makers to understand the roles short sellers can play in a health market. The short sales have been practiced for more than 5 years in China. But the scale of short sales is still very small, with a short interest on the order of $10^{-4}$ of total market capitalization. But the short interest in the U.S. market have already reached the level over $1.7\%$ of total market capitalization as shown by 2001\cite{d2002market}. Sources need to expand for investors to borrow stocks for short sales.

There are three index future contracts traded in the Chinese markets, including two recently launched on April 16, 2015 before the crisis. The spot markets are constrained in short selling. The future market offer investors an alternative to short the market and a tool to manage their risk in a downward-trending market. During the deleveraging crisis, the index future contracts were traded at a maximum discount rate of over 10\% with respect to their spot indices. People may argue that the short sales on the future market drag the spot market into the crisis. We believe it is more likely that people turn to sell on the future markets when they can not close their long position in stocks when they hit the 10\% limit. In our study, the major part of the aggregate market return (-0.54 out of -0.59) can be explained by the 5 dependent variables (the FAMA three factors, the leverage ratio and short sale ratio). It is left for the future study on the trading activities on the future markets to find out whether the aggregate crash of the market is further caused by the short sale on future markets or the crash of the future markets are simply the mirror the crash of the spot market.

\section{References}
\bibliographystyle{elsarticle-harv}
\bibliography{second}

\end{document}